\newtheorem{thm}{Theorem}[section]
\newtheorem{lem}[thm]{Lemma}
\newtheorem{cor}[thm]{Corollary}
\tikzset{ampersand replacement=\&}
  \newcommand{\oo}{\infty}
  \newcommand{\C}{\mathbb{C}}
  \newcommand{\A}{\mathcal{A}}
  \newcommand{\B}{\mathcal{B}}
  \newcommand{\D}{\mathcal{D}}
  \newcommand{\del}{\partial}
  \newcommand{\id}{\mathrm{id}}
  \newcommand{\eps}{\varepsilon}
\title{Reducing triangular systems of ODEs with rational coefficients,
with applications to coupled Regge-Wheeler equations}
\author{Igor Khavkine\\
	\texttt{khavkine@math.cas.cz}\\
	Institute of Mathematics, Czech Academy of Sciences,\\
	\v{Z}itn{\'a} 25, 115 67 Praha 1, Czech Republic}
\date{}
\begin{document}
\maketitle

\begin{abstract}
We concisely summarize a method of finding all rational solutions to an
inhomogeneous rational ODE system of arbitrary order (but solvable for
its highest order terms) by converting it into a finite dimensional
linear algebra problem. This method is then used to solve the problem of
conclusively deciding when certain rational ODE systems in upper
triangular form can or cannot be reduced to diagonal form by
differential operators with rational coefficients. As specific examples,
we consider systems of coupled Regge-Wheeler equations, which have
naturally appeared in previous work on vector and tensor perturbations
on the Schwarzschild black hole spacetime. Our systematic approach
reproduces and complements identities that have been previously found by
trial and error methods.
\end{abstract}

\section{Introduction} \label{sec:intro}

In the recent work~\cite{kh-vwtriang}, we have shown how, after a
separation of variables, the radial mode equations of the vector wave
equation $\square v_\mu = 0$ on the Schwarzschild black hole spacetime
may be significantly simplified by systematically decoupling them into an
upper triangular form, where the diagonal components are generalized
Regge-Wheeler operators and only a few of the off-diagonal components
are non-vanishing. The original radial mode equations constitute a
$4\times 4$ second order linear ODE, whose components are coupled in a
rather complicated way. The Regge-Wheeler operators appearing on the
diagonal of the upper triangular form are second order scalar
differential operators, with a well-studied spectral theory. This
simplification makes it possible to transfer that knowledge to the study
of the spectral theory of the original radial mode equations, which was
otherwise rather unapproachable.

What is remarkable is that the original equations, the simplified upper
triangular form, as well as the decoupling transformation are all
ordinary differential operators with rational coefficients (at least in
the standard Schwarzschild radial coordinate $r$). The existence of such
such a simplification, in particular the ability to set to zero most of
the off-diagonal terms in the upper triangular form, follows from
specific identities previously discovered by trial and error
in~\cite{berndtson}, and in part also in~\cite{rosa-dolan}
(see~\cite{kh-vwtriang} for a full discussion), and is certainly not
obvious \emph{a priori}. This naturally raises the questions of how
these identities could be recovered in a systematic approach, and
whether more could be discovered to push the simplifications described
above as far as possible. These questions become particularly relevant
for trying to repeat the same simplifications for the Lichnerowicz
equation $\square p_{\mu\nu} - 2 R_{(\mu}{}^{\lambda\kappa}{}_{\nu)}
p_{\lambda\kappa} = 0$, which has a role relative to linearized Einstein
equations analogous to that of the vector wave equation relative to
Maxwell equations. The relevant identities were also discovered
in~\cite{berndtson}, but only by means of voluminous trial and error
calculations and without a clear answer to whether they could be further
improved. We will revisit this point when considering examples in
Section~\ref{sec:examples}.

The above questions were left open in~\cite{kh-vwtriang} and are
answered in this work. The main systematic tool at our disposal is the
theory of rational solutions of ordinary differential equations (ODEs)
with rational coefficients. Under appropriate mild hypotheses on the
equation, the search for such solutions can be reduced to a finite
dimensional linear algebra problem (Theorem~\ref{thm:univ-mult}). We
summarize this theory in Section~\ref{sec:rat-sols}. The theory of
rational solutions of scalar rational ODEs is fairly well developed
(cf.\ the monograph~\cite{abramov} and the references therein; more
precise references are given in the text). Our innovation is to
synthesize this approach into an economical form, based on what we call
\emph{leading} (or \emph{trailing}) \emph{multipliers}, that is directly
applicable to our examples of interest, but also more generally to ODE
systems of arbitrary size and order. In Section~\ref{sec:offdiag}, we
consider the problem of setting to zero an off-diagonal block in an
upper triangular rational ODE system by a transformation with rational
coefficients. This problem is first reduced to an operator identity
(Equation~\eqref{eq:offdiag}), which in turn can be solved by converting
it into an inhomogeneous rational ODE system
(Theorem~\ref{thm:reduce-order}). Finally, in Section~\ref{sec:rw}, we
combine the results of Sections~\ref{sec:rat-sols} and~\ref{sec:offdiag}
to show how the special identities used in~\cite{kh-vwtriang}
and~\cite{berndtson} can be recovered with minimal effort, especially
when aided by computer algebra. In particular, we can conclusively
decide when simplifications of the kind described earlier do or do not
exist, with several examples given in Section~\ref{sec:examples}.
Section~\ref{sec:discuss} concludes with a discussion of the results and
an outlook to further work.

\section{Rational solutions of ODEs with rational coefficients}
\label{sec:rat-sols}

The main objects under our study will be ordinary differential operators
and equations (ODEs) with rational coefficients. We will usually denote the
independent variable by $r$. A differential operator $e$ applied to a
function $u = u(r)$ will be denoted by $e[u]$. Both $u$ and $e[u]$ could
be vector valued. We do not put any \emph{a priori} bounds on the dimensions or
differential order of $e$. Hence, $e$ can also be seen as a matrix of
scalar differential operators. Hence, when $e$ is of order zero, $e[u]$
will correspond to multiplication of $u$ by an $r$-dependent matrix. We
will denote the composition of differential operators by $\circ$, so
that $e\circ f[u] = e[f[u]]$. We will restrict our attention only to
differential operators with rational coefficients and in general complex
valued.

In this section, we will eventually show how to find all the rational
solutions $u = u(r)$ of a rational ODE $e[u] = v$, by reducing it to a
finite dimensional linear algebra problem. Our approach starts with a
Laurent series representation $u = \sum_n u_n r^n$ converts the equation
$e[u] = v$ into a recurrence relation on the coefficients of $u$. At
different stages, it would be useful to consider Laurent series of
different kinds. In particular, we will mostly deal with formal series
(no requirement of convergence). However, convergence will be automatic
if we know in advance that the series has only finitely many terms or
that it comes from the expansion of a rational function. Thus, we may
distinguish \emph{unbounded} Laurent series $\C[[r,r^{-1}]]$,
\emph{bounded (from below)} Laurent series $\C[r^{-1}][[r]]$, \emph{bounded from
above} Laurent series $\C[r][[r^{-1}]]$ and \emph{Laurent polynomials}
$\C[r,r^{-1}]$. Of course, we could also consider Laurent series
centered at some other $r = \rho \ne 0$, but for convenience of notation
whenever possible we will stick with $\rho=0$.

For bounded (from below) Laurent series, it is helpful to define
leading or trailing orders and coefficients. For $u = \sum_n u_n r^n \in
\C[r^{-1}][[r]]$, if we can write
\begin{equation}
	u = \begin{bmatrix}
		u^1_{n_1} r^{n_1} (1 + O(r)) \\
		u^2_{n_2} r^{n_2} (1 + O(r)) \\
		\vdots
	\end{bmatrix} ,
\end{equation}
with each $O(r) \in r\C[[r]]$, the $n_1,n_2,\ldots$ are the
\emph{leading orders} of the components $u^1,u^2,\ldots$ of $u$, with
the exception when $u^i_{n_i} = 0$, in which case we set $n_i = +\oo$.
We denote by $|\check{u}|$ the vector where each component of $u$ is
replaced by its leading order, and we refer to it as the \emph{leading
order} of $u$. When $n = \min_i n_i < \oo$, $u_n \ne 0$ and we call it
the \emph{leading coefficient} of $u$. We define the leading coefficient
of $0$ to be $0$.

Similarly, for bounded from above Laurent series $u \in C[r][[r^{-1}]]$,
if we can write
\begin{equation}
	u = \begin{bmatrix}
		u^1_{n_1} r^{n_1} (1 + O(r^{-1})) \\
		u^2_{n_2} r^{n_2} (1 + O(r^{-1})) \\
		\vdots
	\end{bmatrix} ,
\end{equation}
with each $O(r^{-1}) \in r^{-1} \C[[r^{-1}]]$, the $n_1,n_2,\ldots$ are
the \emph{trailing orders} of the components $u^1,u^2,\ldots$ of $u$,
with the exception when $u^i_{n_i} = 0$, in which case we set $n_i =
-\oo$. The \emph{trailing order} $|\hat{u}|$ of $u$ is the vector of the
trailing orders of the components of $u$. When $n = \max_i n_i > -\oo$,
$u_n \ne 0$ and we call it the \emph{trailing coefficient} of $u$. We
define the trailing coefficient of $0$ to be $0$.

Clearly the leading (trailing) coefficient of a bounded (from above)
Laurent series vanishes if and only if the whole series vanishes. For
Laurent polynomials $u \in \C[r,r^{-1}]$, both the leading and trailing
orders, and coefficients, are well-defined.

Consider an ODE $e[u] = 0$ on bounded Laurent series $u = \sum_n u_n r^n
\in \C[r^{-1}][[r]]$. What we like to do is turn $e[u] = 0$ into a
linear recurrence relation on the coefficients $u_n$ of the form $E_n
u_n = f_n(u_{n-1},u_{n-2},\ldots)$ and then uniquely solve for $u_n$ as
a function of $u_{n-1}$ and lower order coefficients, for almost all $n$
(that is, all but finitely many). Those finitely many $n$ for which the
solution for $u_n$ would not be unique would then determine the
dimension of the solution space of the ODE. This approach requires that
the coefficients $E_n$ in the recurrence relation be invertible for
almost all $n$. For scalar equations this is an almost trivial
requirement, but in matrix equations different components of $e$ may be
weighted so differently by powers of $r$ that the coefficient $E_n$
comes out as a singular matrix for infinitely many $n$. Often this
problem can be remedied by applying suitable transformations to $u$ and
to $e[u]$.

Let $S = S(r)$ and $T = T(r)$ be matrices with Laurent polynomial
components. For future convenience, we also require that the inverses
$S^{-1}$ and $T^{-1}$ also have Laurent polynomial components (or,
equivalently, the determinants of $S$ and $T$ are proportional to single
powers of $r$). We say that $S$ and $T$ are respectively the
\emph{source} and \emph{target leading multipliers} of $e$ when, after
expanding all rational coefficients as bounded Laurent series, we have
\begin{equation} \label{eq:lead-mult}
	e[S(r) u_n r^n] = T(r) (E_n u_n r^n + r^n O(r)) ,
\end{equation}
with the components of $O(r)$ all in $r\C[[r]]$ and $E_n$ an
$r$-independent matrix that is invertible for almost all $n$. We call
$E_n$ the \emph{trailing characteristic matrix} of $e$ with respect to
the given multipliers. Similarly, we say that $S$ and $T$ are
respectively the \emph{source} and \emph{target trailing multipliers} of
$e$ when, after expanding all rational coefficients as bounded from
above Laurent series, we have
\begin{equation} \label{eq:trail-mult}
	e[S(r) u_n r^n] = T(r) (E_n u_n r^n + r^n O(r^{-1})) ,
\end{equation}
with the components of $O(r^{-1})$ all in $r^{-1}\C[[r^{-1}]]$ and $E_n$
an $r$-independent matrix that is invertible for almost all $n$. We call
$E_n$ the \emph{trailing characteristic matrix} of $e$ with respect to
the given multipliers.

Those integer $n\in \mathbb{Z}$ such that $\det E_n = 0$, which is a
polynomial in $n$, are called (respectively \emph{leading} or
\emph{trailing}) \emph{(integer) characteristic roots} or
\emph{exponents} of $e$ with respect to given multipliers $S,T$. We
denote the set of such leading characteristic exponents by
$\check{\sigma}(e)$ and the set of such trailing characteristic
exponents by $\hat{\sigma}(e)$, with implicit dependence on the $S,T$
multipliers, of course.

We will not dwell on when leading or trailing multipliers exist, but
will just assume that they are given for any particular problem. Often
$S$ and $T$ may be taken to be diagonal, with appropriately chosen
powers of $r$ on the diagonal. Otherwise, they could be determined by a
recursive procedure similar to that used in the analysis of regular and
irregular singularities for ODEs with meromorphic
coefficients~\cite{wasow}.

%

Any rational $u \in \C(r)$ will have a (convergent) bounded Laurent
series expansion about any point $r=\rho$. Without loss of generality,
let us take $\rho = 0$. We would like to
prove some bounds on the leading order of $u$ at $r=0$ when it solves
$e[u] = v$, with some rational $v \in \C(r)$. For this purpose, it is
actually more natural to allow $u$ and $v$ to be bounded Laurent series.

\begin{lem} \label{lem:leading-bounds}
Let $e[u]=0$ be an ODE with rational coefficients, leading multipliers
$S,T$, and leading characteristic matrix $E_n$. Let $u,v \in
\C[r^{-1}][[r]]$ with leading orders $m = \min_i |\check{u}^i|$, $n =
\min_i |\check{v}^i|$ (the values $m=\oo$ or $n=\oo$ are both
permissible). If $e[Su] = Tv$, then either (a) $m = n$ and (provided
$n<\oo$) $v_n = E_n u_n$ or (b) $m$ is a leading characteristic exponent
of $e$, $E_m u_m = 0$, and $m < n$. In other words
\begin{equation} \label{eq:leading-bounds}
	\min \; \{ n\} \cup \check{\sigma}(e) \le m .
\end{equation}
\end{lem}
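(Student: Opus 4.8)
The plan is to substitute the series expansions directly into the relation $e[Su] = Tv$ and track the leading orders on both sides using the defining property of the leading multipliers, equation~\eqref{eq:lead-mult}. Writing $u = \sum_k u_k r^k$, I would expand $e[Su] = \sum_k e[S u_k r^k]$ and apply \eqref{eq:lead-mult} termwise to get $e[Su] = T \sum_k (E_k u_k r^k + r^k O(r))$. Since $T$ is an invertible Laurent-polynomial multiplier common to both sides, the equation $e[Su] = Tv$ reduces to the coefficient identity $\sum_k (E_k u_k r^k + r^k O(r)) = v = \sum_k v_k r^k$ as bounded Laurent series. The key structural point is that each $E_k u_k r^k$ contributes at order exactly $k$, while the accompanying $r^k O(r)$ contributes only at orders strictly greater than $k$; so the lowest order at which anything appears on the left is governed by the smallest $k$ with $u_k \ne 0$, namely $m$.

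**Next I would** examine the coefficient of the lowest surviving power of $r$. The total coefficient of $r^m$ on the left-hand side is $E_m u_m$, because all the $O(r)$ correction terms coming from $u_k$ with $k < m$ would have to be matched by nonzero $u_k$, which do not exist, and the corrections from $u_m$ itself start only at order $m+1$. Matching with the right-hand side at order $m$ gives $E_m u_m = v_m$. Here I split into the two announced cases. If $E_m u_m \ne 0$, then $v$ has a nonvanishing coefficient at order $m$, forcing its leading order to satisfy $n \le m$; combined with the fact that no term on the left appears below order $m$ (so $v_k = 0$ for $k < m$, giving $n \ge m$), this yields $n = m$ and $v_m = E_m u_m$, which is case~(a). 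If instead $E_m u_m = 0$ with $u_m \ne 0$, then $E_m$ is a singular matrix, so $\det E_m = 0$ and $m \in \check\sigma(e)$; moreover the coefficient of $r^m$ in $v$ vanishes, so $v_m = 0$ and hence $n > m$, which is case~(b).

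**Finally I would** assemble the inequality~\eqref{eq:leading-bounds}. In case~(a) we have $m = n$, so $\min\{\,\{n\}\cup\check\sigma(e)\,\} \le n = m$. In case~(b) we have $m \in \check\sigma(e)$, so again $\min\{\,\{n\}\cup\check\sigma(e)\,\} \le m$. The degenerate possibilities are handled by the convention that the leading coefficient and order of the zero series are $0$ and $+\oo$: if $u = 0$ then $m = \oo$ and the inequality is vacuous, while if $v = 0$ then $n = \oo$ and the statement reduces to the claim that $m$ is either $+\oo$ or a characteristic exponent, which is exactly case~(b).

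**The main obstacle** will be justifying rigorously that the $r^k O(r)$ correction terms cannot interfere with the leading-order bookkeeping—that is, confirming that summing \eqref{eq:lead-mult} over infinitely many $k$ produces a well-defined bounded Laurent series in which the coefficient of $r^m$ is unambiguously $E_m u_m$. Because $u \in \C[r^{-1}][[r]]$ is bounded from below, only finitely many $k < m + N$ contribute to any fixed order $m + N$, so each coefficient is a finite sum and the manipulation is legitimate; but I would need to state this summability carefully, and verify that applying the linear operator $e$ commutes with the infinite summation at the level of formal bounded Laurent series, rather than treating \eqref{eq:lead-mult} as applying only to a single monomial.
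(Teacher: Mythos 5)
Your proposal is correct and takes essentially the same route as the paper's own proof: both hinge on applying the defining property~\eqref{eq:lead-mult} of the leading multipliers at the leading order $m$ and splitting into the cases $E_m u_m \ne 0$ (yielding (a)) and $E_m u_m = 0$ with $u_m \ne 0$ (yielding (b), since $E_m$ must then be singular), from which the bound~\eqref{eq:leading-bounds} follows. The only difference is one of detail: you make explicit the termwise summation of~\eqref{eq:lead-mult} over the series and its coefficientwise finiteness, which the paper compresses into the phrase that the defining property ``directly implies'' each case.
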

This result and proof are analogous to those presented in~\textsection6
of~\cite{abramov}, where only the case of scalar equations and
polynomial coefficients is treated. The monograph~\cite{abramov}
cites~\cite{abramov89a,abramov89b,abp95} as the original sources for the
basic ingredients of the approach. A generalization of the approach to
systems of arbitrary size and order can be found in~\cite{abramov14}
(which also cites slightly earlier related work). Our innovation is to
synthesize this approach into an economic method, as presented in this
section, based on the convenient notion of leading (trailing)
multipliers $S,T$ and the way they lead to the leading (trailing)
characteristic matrix $E_n$.
\begin{proof}
If $m=\oo$, this means that $u=0$. Then also $v=0$ and $n=\oo$, meaning
that (a) holds. For the rest we will assume that $m<\oo$, meaning that
$u$ has the non-vanishing leading coefficient $u_m \ne 0$. If $E_m u_m
\ne 0$, then the defining property~\eqref{eq:lead-mult} of leading
multipliers $S$ and $T$ directly implies part (a), that is $n = m$. On
the other hand, if $E_m u_m = 0$ and since by definition $u_m \ne 0$,
the leading order of $u$ must be a characteristic exponent of $e$, $m
\in \check{\sigma}(e)$. Using again~\eqref{eq:lead-mult}, we also find
$m < n$.

Part (a) implies $n \le m$, while part (b) implies
$\min\check{\sigma}(e) \le m$ and $m<n$. Since at least one of (a) or
(b) always holds, the lower bound~\eqref{eq:leading-bounds} on $m$ is
always true.
\end{proof}

All the same arguments apply to Laurent expansions about $r=\oo$, though
after making use of the transformation $r \mapsto 1/r$. For convenience,
we state the corresponding result without the need to invoke this
transformation.

\begin{lem} \label{lem:trailing-bounds}
Let $e[u]=0$ be an ODE with rational coefficients, trailing multipliers
$S,T$, and trailing characteristic matrix $E_n$. Let $u,v \in
\C[r][[r^{-1}]]$ with trailing orders $m = \max_i |\hat{u}^i|$, $m =
\max_i |\hat{v}^i|$ (the values $m=-\oo$ or $n=-\oo$ are both
permissible). If $e[Su] = Tv$, then either (a) $m = n$ and (provided
$n>-\oo$) $v_n = E_n u_n$ or (b) $m$ is a trailing characteristic
exponent of $e$ and $E_m u_m = 0$. In other words
\begin{equation} \label{eq:trailing-bounds}
	m \le \max \; \{ n\} \cup \hat{\sigma}(e) .
\end{equation}
\end{lem}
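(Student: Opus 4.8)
The plan is to run the proof of Lemma~\ref{lem:leading-bounds} in its dual form, replacing the leading-multiplier identity~\eqref{eq:lead-mult} by the trailing-multiplier identity~\eqref{eq:trail-mult} and turning every minimum/leading order into a maximum/trailing order, with the inequalities reversed accordingly. Since the statement is deliberately phrased so as to avoid the substitution $r \mapsto 1/r$, I would give a self-contained argument that simply reads off the highest-order term of $e[Su]$.

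First I would dispose of the degenerate case $m = -\oo$: then $u = 0$, so $e[Su] = 0 = Tv$, and since $T^{-1}$ has Laurent polynomial entries this forces $v = 0$ and $n = -\oo$, so (a) holds. Assuming now $m > -\oo$, the series $u$ has a nonzero trailing coefficient $u_m$. The key step is to expand $e[Su] = \sum_k e[S(r) u_k r^k]$ and apply~\eqref{eq:trail-mult} termwise, obtaining $e[Su] = T(r) \sum_k (E_k u_k r^k + r^k O(r^{-1}))$. Because $u$ is bounded from above with trailing order $m$, every summand with $k < m$ together with every remainder $r^k O(r^{-1})$ contributes only powers strictly below $r^m$; hence the coefficient of $r^m$ in $T^{-1} e[Su]$ is exactly $E_m u_m$. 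Matching against $T^{-1}(Tv) = v$ yields $v_m = E_m u_m$, and the dichotomy follows immediately: if $E_m u_m \ne 0$ then $v$ has trailing order $n = m$ with $v_n = E_n u_n$, which is case (a); if $E_m u_m = 0$ then $E_m$ is singular (as $u_m \ne 0$), so $m \in \hat{\sigma}(e)$, while $v_m = 0$ forces $n < m$, which is case (b). In either case $m \in \{n\} \cup \hat{\sigma}(e)$, so $m \le \max(\{n\} \cup \hat{\sigma}(e))$, which is~\eqref{eq:trailing-bounds}.

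I expect no genuine obstacle, since the whole argument is formally dual to the one already proved. The only point needing care is the order bookkeeping in the bounded-from-above setting: one must confirm that the $O(r^{-1})$ tails and the sub-trailing coefficients of $u$ genuinely land below $r^m$, so that $E_m u_m$ is unambiguously the top coefficient of $T^{-1} e[Su]$. This is immediate from the shape of~\eqref{eq:trail-mult}, but it is precisely where a reversed-inequality slip would be easiest to introduce.
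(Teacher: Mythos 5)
Your proof is correct and takes essentially the same route as the paper: the paper disposes of this lemma by observing that the argument of Lemma~\ref{lem:leading-bounds} dualizes under $r \mapsto 1/r$, and your self-contained argument is exactly that dual — the same dichotomy on whether $E_m u_m$ vanishes, with the same degenerate case and the same conclusion that $m \in \{n\} \cup \hat{\sigma}(e)$. The extra bookkeeping you supply (termwise application of~\eqref{eq:trail-mult} and the check that no powers above $r^m$ survive in $T^{-1}e[Su]$) just makes explicit what the paper leaves implicit.
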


Now we know how to bound the order of the pole of a rational solution
$u$ at any particular value of $r=\rho \in \mathbb{C}$. For the
following class of ODEs, we can also identify all the potential
locations of the poles of $u$.

\begin{lem} \label{lem:pole-locations}
Let $e[u] = 0$ be an ODE of differential order $p$ with rational
coefficients, for which there exists an invertible matrix $P=P(r)$ with
rational coefficients such that $P e[u] = \frac{d^p}{dr^p} u +
\tilde{e}[u]$, where $\tilde{e}$ is of differential order at most $p-1$.
For rational $u,v \in \C(r)$, if $e[u] = v$, then $u(r)$ is smooth
(i.e.,\ has no pole) at all but finitely many points of $\C$. The only
possible exceptions are $r=\rho$, with $\rho$ one of the poles of $Pv$
or of the coefficients of $\tilde{e}$.
\end{lem}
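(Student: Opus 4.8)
The plan is to normalize the equation using $P$, and then to show that at any point $\rho$ where $Pv$ and all coefficients of $\tilde{e}$ are holomorphic, a pole of $u$ is impossible by a direct pole-order count. Since $P$ is invertible with rational coefficients, applying it to $e[u]=v$ yields the normal form $\frac{d^p}{dr^p} u + \tilde{e}[u] = Pv$, equivalently $\frac{d^p}{dr^p} u = Pv - \tilde{e}[u]$. Let $\Sigma \subset \C$ denote the union of the poles of $Pv$ with the poles of the (rational) coefficients of $\tilde{e}$; since all of these are rational, $\Sigma$ is a finite set. Fix any $\rho \notin \Sigma$, so that near $\rho$ both $Pv$ and every coefficient of $\tilde{e}$ are holomorphic. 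It suffices to show that $u$ has no pole at such a $\rho$.

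First I would argue by contradiction: suppose $u$ has a pole at $\rho$, and let $k \ge 1$ be the largest pole order occurring among the components of $u$ at $\rho$, attained by the component $u^i$. Expanding in a bounded-from-below Laurent series about $r=\rho$, the $i$-th component of the left-hand side, $\frac{d^p}{dr^p} u^i$, has a pole of order exactly $k+p$: differentiating $(r-\rho)^{-k}$ a total of $p$ times produces the factor $(-1)^p k(k+1)\cdots(k+p-1)$, which is nonzero precisely because $k\ge 1$. On the right-hand side, $(Pv)^i$ is holomorphic at $\rho$, while $(\tilde{e}[u])^i$ is a finite sum of terms of the form (a coefficient holomorphic at $\rho$) times a derivative of some $u^j$ of order at most $p-1$. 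Since each $u^j$ has a pole of order at most $k$ at $\rho$, each such term has a pole of order at most $k+(p-1)$, and multiplying by a locally holomorphic coefficient cannot raise this order. Hence the right-hand side has a pole of order at most $k+p-1$, contradicting the order $k+p$ on the left. Therefore $u$ is holomorphic — that is, smooth — at $\rho$.

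Finally, since $u \in \C(r)$ is rational, its set of poles is finite, and by the previous step every pole of $u$ must lie in $\Sigma$. Consequently $u$ is smooth at all but finitely many points of $\C$, the only possible exceptions being the poles of $Pv$ and of the coefficients of $\tilde{e}$, as claimed.

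The one point requiring care is the pole-order bookkeeping in the vector-valued setting. The essential mechanism is that $\tilde{e}$ has differential order \emph{strictly} less than $p$ with coefficients holomorphic at $\rho$, so no term of $\tilde{e}[u]$ can reach the pole order $k+p$ contributed by the principal term $\frac{d^p}{dr^p} u^i$; this strict drop by one in differential order is exactly what makes $\rho$ an ordinary point of the normalized system and forbids any pole there. The main thing to verify rigorously is that the leading Laurent coefficient of $\frac{d^p}{dr^p}u^i$ genuinely survives — guaranteed by $k \ge 1$ — and that the maximum over components is taken before differentiating, so that the dominant contribution is correctly isolated in a single component equation.
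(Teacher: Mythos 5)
Your proof is correct and follows essentially the same route as the paper's: normalize with $P$, then compare pole orders at any point $\rho$ where $Pv$ and the coefficients of $\tilde{e}$ are holomorphic, using that $\frac{d^p}{dr^p}u$ produces a pole of order $k+p$ which nothing on the other side can match. The only difference is that you spell out the vector-valued bookkeeping (choosing the component attaining the maximal pole order $k$, and checking the factor $k(k+1)\cdots(k+p-1)\ne 0$) which the paper leaves implicit.
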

\begin{proof}
By our hypotheses, we can put the equation $e[u] = v$ into the
equivalent form
\begin{equation} \label{eq:standard-ode}
	\frac{d^p}{dr^p} u + \tilde{e}[u] = Pv ,
\end{equation}
where $Pv$ is rational and $\tilde{e}[u]$ has rational coefficients.
Consider a point $r=\rho \in \mathbb{C}$, that is not pole of $Pv$ or of
the coefficients of $\tilde{e}[u]$. There are obviously only finitely
many such excluded points. If $u$ has a pole of type $(r-\rho)^{-k}$,
then $\frac{d^p}{dr^p} u$ has a pole of type $(r-\rho)^{-k-p}$, while
$Pv$ and $\tilde{e}[u]$ will only have poles of lower order. But this
means that such a $u$ cannot be a solution of our equation. Hence, any
rational solution $u \in \C(r)$ can have poles only in the already
mentioned excluded set.
\end{proof}

Given a rational ODE $e[u] = 0$, when considering Laurent expansions at
$r=\rho$, let us denote the corresponding leading multipliers by
$S_\rho, T_\rho$, which are by definition rational and have poles only
at $r=\rho$ (and $r=\oo$, of course). For a rational $u \in \C(r)$, if
we know that its poles are restricted to a finite set of points in $\C$
and we have a bound on the degree of the pole at each of these points,
then we can find a rational matrix $R=R(r)$ such that $Ru$ has no poles
in $\C$.

\begin{thm} \label{thm:univ-mult}
Let $e[u] = v$ be an ODE with rational coefficients and rational $v \in
\C(r)$, satisfying the hypotheses of Lemma~\ref{lem:pole-locations}.
Suppose also that we have the leading multipliers $S_\rho,T_\rho$ of $e$
at each of the finitely many exceptional points $r=\rho\in \C$
identified in Lemma~\ref{lem:pole-locations}. Then, there exists a
rational matrix $R=R(r)$ such that, for any rational $u\in \C(r)$
satisfying $e[u] = v$, there is a Laurent polynomial $\tilde{u} \in
\C[r,r^{-1}]$ satisfying $u = R\tilde{u}$.
\end{thm}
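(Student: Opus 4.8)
The plan is to show that every rational solution $u$ of $e[u]=v$ has its poles confined to a fixed finite set of points, with pole orders bounded \emph{uniformly} (independently of which solution $u$ we take), so that a single rational factor clears all of them at once. Concretely, I would take $R$ of the scalar form $R(r)=\left(\prod_\rho (r-\rho)^{-N_\rho}\right) I$, where the product runs over the finitely many exceptional points $\rho$ and $N_\rho$ is a uniform bound on the pole order of $u$ at $\rho$; then $\tilde u := R^{-1}u$ will have no finite poles and must therefore be a polynomial, hence a fortiori a Laurent polynomial in $\C[r,r^{-1}]$, with $u = R\tilde u$ as required.

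First I would invoke Lemma~\ref{lem:pole-locations}: under the standing hypothesis that $P e[u] = \frac{d^p}{dr^p}u + \tilde e[u]$ with $P$ invertible rational, any rational solution $u$ can have poles only among the finitely many points $\rho_1,\ldots,\rho_k$ that are poles of $Pv$ or of the coefficients of $\tilde e$. This pins down the \emph{locations} of all possible poles once and for all, uniformly over all solutions.

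Next, at each exceptional $\rho$ I would bound the pole \emph{order}. Working in the local coordinate $r-\rho$, set $\bar u = S_\rho^{-1}u$ and $\bar v = T_\rho^{-1}v$; since $S_\rho,T_\rho$ have Laurent-polynomial components with Laurent-polynomial inverses, $\bar u,\bar v$ are again bounded Laurent series, and the equation takes the form $e[S_\rho\bar u]=T_\rho\bar v$, which is exactly the hypothesis of Lemma~\ref{lem:leading-bounds}. That lemma then yields $\min\{\,n_\rho\}\cup\check{\sigma}_\rho(e)\le m_\rho$, where $m_\rho$ is the leading order of $\bar u$ at $\rho$ and $n_\rho$ that of $\bar v = T_\rho^{-1}v$. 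The right-hand data---the leading order of $T_\rho^{-1}v$ and the characteristic exponents $\check{\sigma}_\rho(e)$---depend only on $v$, $e$, and the chosen multipliers, \emph{not} on the particular solution $u$. A lower bound on the leading order of $\bar u$ is an upper bound on its pole order, and combining this with the fixed, finite order shift introduced by the Laurent-polynomial matrix $S_\rho$ gives the desired uniform bound $N_\rho$ on the pole order of $u = S_\rho\bar u$ at $\rho$.

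With all $N_\rho$ in hand, $R$ as above is rational and invertible with rational inverse, and for every rational solution $u$ the product $\tilde u = R^{-1}u = u\prod_\rho (r-\rho)^{N_\rho}$ is a rational function with no poles in $\C$, hence a polynomial and in particular an element of $\C[r,r^{-1}]$; note that the behaviour at $r=\oo$ requires no separate treatment, since a pole there is already permitted for a Laurent polynomial. I expect the main obstacle to lie in the bookkeeping of the middle step: one must run the leading-order analysis separately in each local chart $r-\rho$, track precisely how $S_\rho$ and $T_\rho$ shift orders between $u,v$ and $\bar u,\bar v$, and verify that the resulting bound is genuinely independent of the solution $u$. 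Once that uniformity is secured, assembling $R$ and checking $u=R\tilde u$ is immediate.
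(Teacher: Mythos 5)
Your proof is correct and follows essentially the same route as the paper's: Lemma~\ref{lem:pole-locations} pins down the possible pole locations, and Lemma~\ref{lem:leading-bounds}, applied at each exceptional point through the local multipliers $S_\rho,T_\rho$, gives solution-independent lower bounds on the leading orders, from which the universal multiplier is assembled. The only (harmless) difference is in the final assembly: the paper takes $R = \prod_\rho S_\rho(r)\,(r-\rho)^{\check{n}_\rho}$, keeping the matrices $S_\rho$ inside $R$, whereas you absorb the finite order shift caused by $S_\rho$ into the exponent and take $R$ to be a scalar multiple of the identity --- both choices satisfy the theorem.
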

We call such a matrix $R$ a \emph{universal multiplier} for the rational
inhomogeneous ODE $e[u] = v$. A universal multiplier certainly need not
be unique. The existence of universal multipliers for scalar equations
is discussed in~\cite[\textsection7]{abramov}, which
cites~\cite{abramov89b,abp95} as original references. For systems of
arbitrary size and order, the existence of universal multipliers is
discussed for instance in~\cite{abramov14} (with references to slightly
earlier work).
\begin{proof}
A rational $u \in \C(r)$ has only finitely many poles, and at each of
those poles it has a bounded Laurent series expansion. By invoking
Lemma~\ref{lem:pole-locations} we can constrain the poles of $u$ to a
finite set of points. Then, by invoking Lemma~\ref{lem:leading-bounds},
for each of those points, say $r=\rho$, we can find a finite lower bound
$\check{n}_\rho$ for the leading Laurent order of $S^{-1}_\rho u$ at
$r=\rho$. Recall that one of the defining properties of $S_\rho$ is that
both it and $S^{-1}_\rho$ only have poles at $r=\rho$ (and of course at
$r=\oo$). This means that $\tilde{u} = \prod_\rho S^{-1}_\rho
(r-\rho)^{-\check{n}_\rho} u$, where the product is taken over the
potential pole locations (possibly excluding $\rho=0$), is still
rational but no longer has any poles in $r\in \C$, with the possible
exception of $r=0$. But that can only be if $\tilde{u} \in \C[r,r^{-1}]$
is a Laurent polynomial. Therefore, we can take
\begin{equation} \label{eq:univ-mult}
	R(r) = \prod_{\rho} S_\rho(r) (r-\rho)^{\check{n}_\rho}
\end{equation}
as the desired universal multiplier. Since any of the $\check{n}_\rho$
can be decreased without breaking this result, we have many possible
choices for $R$.
\end{proof}

\begin{cor}
Let $e$ and $v$ be as in Theorem~\ref{thm:univ-mult}, with universal
multiplier $R$. In addition, suppose that we have the leading
multipliers $S_0,T_0$ at $r=0$ and the trailing multipliers
$S_\oo,T_\oo$ at $r=\oo$ for $\tilde{e} = e \circ R$. Then the equation
$e[u] = v$ for $u$ can be reduced to a finite dimensional linear system,
and hence its solution space is finite dimensional.
\end{cor}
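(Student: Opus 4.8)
The plan is to use the universal multiplier of Theorem~\ref{thm:univ-mult} to replace the unknown rational function by a Laurent polynomial, and then to use Lemmas~\ref{lem:leading-bounds} and~\ref{lem:trailing-bounds} at $r=0$ and $r=\oo$ to confine that Laurent polynomial to a finite window of powers of $r$, after which matching coefficients produces a finite linear system.

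First I would apply Theorem~\ref{thm:univ-mult} to write every rational solution $u\in\C(r)$ of $e[u]=v$ as $u = R\tilde{u}$ with $\tilde{u}\in\C[r,r^{-1}]$. Substituting $u=R\tilde{u}$ recasts the problem as $\tilde{e}[\tilde{u}]=v$ with $\tilde{e} = e\circ R$. The matrix $R$ from~\eqref{eq:univ-mult} is invertible over $\C(r)$, since its determinant is the nonzero product of the invertible factors $S_\rho(r)(r-\rho)^{\check{n}_\rho}$; hence $\tilde{u}\mapsto R\tilde{u}$ is injective and the rational solutions $u$ correspond bijectively to the Laurent-polynomial solutions $\tilde{u}$ of $\tilde{e}[\tilde{u}]=v$. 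It therefore suffices to bound the range of powers of $r$ occurring in $\tilde{u}$.

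To bound the lowest power I would set $w = S_0^{-1}\tilde{u}$ and $v' = T_0^{-1}v$, so that $\tilde{e}[\tilde{u}]=v$ takes the form $\tilde{e}[S_0 w] = T_0 v'$ required by Lemma~\ref{lem:leading-bounds}. Since $S_0,T_0$ and their inverses have Laurent-polynomial entries, $w$ is a Laurent polynomial and $v'\in\C[r^{-1}][[r]]$ has a finite leading order $n_0$ at $r=0$. Writing $m_0$ for the leading order of $w$, Lemma~\ref{lem:leading-bounds} yields $\min\{n_0\}\cup\check{\sigma}(\tilde{e})\le m_0$, a finite lower bound because $\check{\sigma}(\tilde{e})$ is the finite set of integer roots of the polynomial $\det E_n$; multiplying $w$ by $S_0$ then gives an explicit lower bound on the lowest power of $r$ in $\tilde{u}$. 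Symmetrically, setting $w'=S_\oo^{-1}\tilde{u}$ and $v''=T_\oo^{-1}v$ puts the equation into the form $\tilde{e}[S_\oo w']=T_\oo v''$, and, writing $m_\oo$ and $n_\oo$ for the trailing orders of $w'$ and $v''$, Lemma~\ref{lem:trailing-bounds} gives $m_\oo \le \max\{n_\oo\}\cup\hat{\sigma}(\tilde{e})$, an explicit upper bound on the highest power of $r$ in $\tilde{u}$.

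Combining the two bounds confines $\tilde{u} = \sum_n \tilde{u}_n r^n$ to a finite range $N_-\le n\le N_+$, so $\tilde{u}$ is determined by finitely many coefficient vectors $\tilde{u}_n$. Clearing a common polynomial denominator $D(r)$ of the coefficients of $\tilde{e}$ and of $v$, the equation becomes $D\tilde{e}[\tilde{u}]=Dv$, an identity between Laurent polynomials whose coefficients depend linearly on the $\tilde{u}_n$ (inhomogeneously, through $Dv$); equating the coefficient of each power of $r$ produces a finite inhomogeneous linear system in the $\tilde{u}_n$. Its solution set is an affine subspace of a finite-dimensional space, and applying $R$ recovers the finite-dimensional solution space of $e[u]=v$. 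The step I expect to be the main obstacle is the bookkeeping of the previous paragraph: converting the bounds on the leading and trailing orders of the auxiliary quantities $w,w'$ back into a genuine two-sided bound on the degree window of $\tilde{u}$ itself, correctly accounting for the componentwise shifts produced by the matrices $S_0$ and $S_\oo$, and checking that the rewritings into the hypotheses of Lemmas~\ref{lem:leading-bounds} and~\ref{lem:trailing-bounds} are legitimate (in particular that $v'$ and $v''$ have the required finite leading and trailing orders).
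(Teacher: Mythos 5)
Your proposal is correct and follows essentially the same route as the paper's own proof: apply Theorem~\ref{thm:univ-mult} to reduce to a Laurent-polynomial unknown $\tilde{u}$ for $\tilde{e}=e\circ R$, use Lemma~\ref{lem:leading-bounds} with $S_0,T_0$ to bound the lowest power and Lemma~\ref{lem:trailing-bounds} with $S_\oo,T_\oo$ to bound the highest power, then parametrize $\tilde{u}$ by finitely many coefficients and match coefficients after clearing denominators. The extra bookkeeping you flag (invertibility of $R$, the componentwise shifts from $S_0,S_\oo$) is handled implicitly in the paper for exactly the reasons you give, so it is a refinement rather than a gap.
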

\begin{proof}
By invoking Theorem~\ref{thm:univ-mult}, solving $e[u] = v$ for $u\in
\C(r)$ is equivalent to solving $\tilde{e}[\tilde{u}] = v$ for
$\tilde{u} \in \C[r,r^{-1}]$, with $u = R \tilde{u}$ and
$\tilde{e}[\tilde{u}] = e[Ru]$. Invoking Lemma~\ref{lem:leading-bounds}
we can find a finite lower bound on the leading order of $S^{-1}_0
\tilde{u}$ and hence of $\tilde{u}$, which we will call $\check{n}$.
Invoking Lemma~\ref{lem:trailing-bounds} we can find a finite upper
bound on the leading order of $S^{-1}_\oo \tilde{u}$ and hence of
$\tilde{u}$, which we will call $\hat{n}$. Therefore, we can parametrize
all solutions as Laurent polynomials
\begin{equation} \label{eq:u-ansatz}
	\tilde{u} = \sum_{n=\check{n}}^{\hat{n}} u_n r^n ,
\end{equation}
which has $\hat{n} - \check{n} + 1 < \oo$ undetermined coefficients.
Plugging this parametrization into the equation $\tilde{e}[\tilde{u}] =
v$, putting both sides over a common denominator, and comparing
coefficients reduces the original problem to a finite dimensional linear
system of equations. The dimension of the solution space of this system
is of course finite, and (crudely) bounded by the number of coefficients
in~\eqref{eq:u-ansatz}.
\end{proof}

Of course, once an equation has been reduced to an explicit finite
dimensional linear system, it can be solved on a computer, even
symbolically.


\section{Reducing triangular ODE systems with rational coefficients}
\label{sec:offdiag}

In this section, we are interested in the following question. Given an
ODE system in block upper triangular form, is it possible to find an
equivalent ODE system where the off-diagonal block has been set to zero,
hence in diagonal form? If possible, we call this a \emph{reduction to
block diagonal form} and say that the original system can be
\emph{reduced}. A refined version of the question is whether a rational
ODE system can be reduced while remaining rational.

The first thing we need to clarify is the notion of equivalence. Roughly
speaking, two ODE systems should be equivalent when there is an
isomorphism between their solution spaces. A further practical
requirement is that this isomorphism be given, in either direction, by
differential operators. After all, transformations given by differential
operators tend to be easier to write down in terms of explicit formulas,
while also allowing rather precise control over some properties of the
coefficients of the ODE systems, like rationality or upper triangular
form. Also, an equivalence should make explicit the transformation of
one ODE system into the other one, again by a differential operator.

We formalize these ideas as follows. Given two ODE systems, $e[u] = 0$
and $\bar{e}[\bar{u}] = 0$, an equivalence between them consists of
pairs of differential operators $k,g$ and $\bar{k},\bar{g}$ obeying the
operator identities, for any $u,\bar{u}$ and $v,\bar{v}$,
\begin{align}
	\bar{e}[k[u]] &= g[e[u]] ,
		&
	\bar{k}[k[u]] &= u ,
		&
	\bar{g}[g[v]] &= v ,
	\\
	e[\bar{k}[\bar{u}]] &= \bar{g}[\bar{e}[\bar{u}]] ,
		&
	k[\bar{k}[\bar{u}]] &= \bar{u} ,
		&
	g[\bar{g}[\bar{v}]] &= \bar{v} .
\end{align}
Graphically, if we represent each differential operator by an arrow and
appropriate function spaces by $\bullet$'s, these identities mean that
the squares in the following diagram are commutative and that the
horizontal arrows compose to identity in either direction:
\begin{equation}
\begin{tikzcd}[column sep=large,row sep=large]
	\bullet \ar{d}{e} \ar[shift left]{r}{k} \&
	\bullet \ar[swap]{d}{\bar e} \ar[shift left]{l}{\bar k}
	\\
	\bullet \ar[shift left]{r}{g} \&
	\bullet \ar[shift left]{l}{\bar g}
\end{tikzcd} \, .
\end{equation}
Basically, these identities imply that for a solution $u$ of $e[u] = 0$,
$\bar{u} = k[u]$ is a solution of $\bar{e}[\bar{u}] = 0$, and vice
versa, where the barred and unbarred transformation operators are
mutually inverse. Finally, when dealing with ODE systems with rational
coefficients, we require the coefficients of the operators $k,g$ and
$\bar{k},\bar{g}$ to be rational as well.

The above notion of equivalence is actually somewhat more rigid than
absolutely necessary, but it will be sufficient for our purposes. A
discussion of a somewhat looser notion of equivalence can be found
in~\cite{kh-vwtriang}, with references to deeper literature on this
topic. Below, we use this notion of equivalence to discuss reduction of
triangular ODE systems. A similar discussion can already be found
in~\cite[Sec.2.3]{kh-vwtriang}.

An ODE system of the form
\begin{equation} \label{eq:triang-ode}
	\begin{bmatrix}
		e_0 & \Delta \\
		0 & e_1
	\end{bmatrix}
	\begin{bmatrix}
		u_0 \\ u_1
	\end{bmatrix}
	=
	\begin{bmatrix}
		0 \\ 0
	\end{bmatrix}
\end{equation}
is said to be \emph{(block) upper triangular}, or \emph{(block)
diagonal} if $\Delta = 0$. We will always assume that this system has
rational coefficients. We presume also that the equations $e_0[u_0]=0$
and $e_1[u_1]=0$ are ODE systems of unspecified dimensions and
differential orders, but such that they can be solved for the highest
derivatives, as in the hypotheses of Lemma~\ref{lem:pole-locations}.

A reduction of the upper triangular ODE system~\eqref{eq:triang-ode} is
an equivalence given by the following pair of commutative diagrams
\begin{equation} \label{eq:triang-reduce}
\begin{tikzcd}[column sep=large,row sep=large]
	\bullet
		\ar[swap]{d}{\begin{bmatrix} e_0 & \Delta \\ 0 & e_1 \end{bmatrix}}
		\ar{r}{\begin{bmatrix} \id & \delta \\ 0 & \id \end{bmatrix}}
	\&
	\bullet
		\ar{d}{\begin{bmatrix} e_0 & 0 \\ 0 & e_1 \end{bmatrix}}
	\\
	\bullet
		\ar[swap]{r}{\begin{bmatrix} \id & \eps \\ 0 & \id \end{bmatrix}}
	\&
	\bullet
\end{tikzcd}
	\, , \quad
\begin{tikzcd}[column sep=large,row sep=large]
	\bullet
		\ar[swap]{d}{\begin{bmatrix} e_0 & 0 \\ 0 & e_1 \end{bmatrix}}
		\ar{r}{\begin{bmatrix} \id & -\delta \\ 0 & \id \end{bmatrix}}
	\&
	\bullet
		\ar{d}{\begin{bmatrix} e_0 & \Delta \\ 0 & e_1 \end{bmatrix}}
	\\
	\bullet
		\ar[swap]{r}{\begin{bmatrix} \id & -\eps \\ 0 & \id \end{bmatrix}}
	\&
	\bullet
\end{tikzcd} \, ,
\end{equation}
where the corresponding horizontal arrows are clearly mutual inverses.
Of course, we require the differential operators $\delta$ and $\eps$ to
have rational coefficients. By direct calculation, we can check that the
above diagrams commute if and only if $\delta$ and $\eps$
satisfy the operator identity
\begin{equation} \label{eq:offdiag}
	e_0 \circ \delta = \Delta + \eps \circ e_1 .
\end{equation}

Note that solutions of~\eqref{eq:offdiag} are certainly not unique. For
instance, for any $\delta,\eps$ solution pair, $(\delta + \alpha\circ
e_1), (\eps + e_0\circ \alpha)$ is another solution, with arbitrary
$\alpha$, since $e_0 \circ (\alpha \circ e_1) = (e_0\circ \alpha) \circ
e_1$. In addition, having a solution pair $\delta,\eps$ for a given
$\Delta$, automatically gives us the solution pairs $(\delta - \alpha),
(\eps+\beta)$ for $\Delta$ replaced with $\Delta + e_0 \circ \alpha +
\beta\circ e_1$. When $e_0[u_0] = 0$ and $e_1[u_1] = 0$ can be solved
for their highest derivatives, we can use the above freedom to reduce
equation~\eqref{eq:offdiag}, with $\delta,\eps$ and $\Delta$ of
potentially high differential orders, to the same equation, but with the
differential orders of $\delta,\eps$ and $\Delta$ bounded by the orders
of $e_0$ and $e_1$.

\begin{thm} \label{thm:reduce-order}
Suppose that the rational ODE systems $e_0[u_0] = 0$ and $e_1[u_1] = 0$
of differential orders $p_0$ and $p_1$, respectively. Suppose also that
they can be solved for the highest order derivatives, that is, for
$i=0,1$ there exist rational invertible matrices $P_i$ such that $P_i
e_i[u] = \frac{d^{p_i}}{dr^{p_i}} u + \tilde{e}_i[u]$, where
$\tilde{e}_i$ is of differential order $<p_i$. (a) The
knowledge of $\delta$ and $\Delta$ in~\eqref{eq:offdiag} is sufficient
to reconstruct $\eps$ uniquely. (b) For given $\Delta$ and a $\delta$ of
fixed differential order, the existence of an $\eps$
satisfying~\eqref{eq:offdiag} is equivalent to a rational ODE system on
the coefficients of $\delta$. (c) If $\Delta$ if of differential order
$<p_0+p_1$, then~\eqref{eq:offdiag} has a solution if and only if it has
a solution where $\delta$ is of differential order $<p_1$ and $\eps$ is
of differential order $<p_0$.
\end{thm}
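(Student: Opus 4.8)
The plan is to prove the three parts in order, using the hypothesis that both $e_0$ and $e_1$ are solvable for their highest derivatives—this is what lets us perform polynomial division of differential operators.

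For part (a), I would read the identity $e_0 \circ \delta = \Delta + \eps \circ e_1$ and solve for $\eps$. The key observation is that once $\delta$ and $\Delta$ are fixed, the operator $e_0 \circ \delta - \Delta$ is completely determined, and we must write it in the form $\eps \circ e_1$. Because $P_1 e_1 = \frac{d^{p_1}}{dr^{p_1}} + \tilde{e}_1$ has invertible leading coefficient, any differential operator can be divided on the right by $e_1$ with a uniquely determined quotient and remainder, where the remainder has differential order $<p_1$. The point is that $\eps \circ e_1$ has remainder zero upon right-division by $e_1$; so the requirement that $e_0\circ\delta - \Delta$ be of the form $\eps\circ e_1$ is exactly the requirement that its right-remainder modulo $e_1$ vanish, and when it does, the quotient is the unique $\eps$. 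Thus $\eps$ is reconstructed uniquely whenever it exists. The main obstacle here is being careful that right-division by a matrix operator is well-defined with rational coefficients, which follows from invertibility of $P_1$ and of the leading matrix.

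For part (b), I would fix the differential order of $\delta$ and write $\delta = \sum_{j=0}^{N} \delta_j \frac{d^j}{dr^j}$ with unknown rational-matrix coefficients $\delta_j$. Feeding this into the right-division procedure of part (a), the condition for existence of $\eps$ is precisely that the right-remainder of $e_0\circ\delta - \Delta$ modulo $e_1$ vanishes identically. This remainder is a differential operator of order $<p_1$ whose coefficients depend \emph{linearly and differentially} on the $\delta_j$ (through $e_0\circ\delta$, which involves derivatives of the $\delta_j$, and through the division algorithm, which only multiplies by rational functions and differentiates). Setting each coefficient of this remainder operator to zero thus yields a rational ODE system in the unknown coefficients $\delta_j$. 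This establishes the claimed equivalence.

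For part (c), the strategy is to use the two freedoms noted before the theorem to reduce the orders of $\delta$ and $\Delta$ without affecting solvability. First, using right-division of $\delta$ by $e_1$, write $\delta = \alpha \circ e_1 + \delta'$ with $\delta'$ of order $<p_1$; the substitution $(\delta,\eps)\mapsto(\delta-\alpha\circ e_1,\,\eps-e_0\circ\alpha)=(\delta',\eps')$ preserves~\eqref{eq:offdiag}, so we may assume $\delta$ has order $<p_1$. With $\delta$ of order $<p_1$ and $e_0$ of order $p_0$, the composite $e_0\circ\delta$ has order $<p_0+p_1$; since $\Delta$ has order $<p_0+p_1$ by hypothesis, part (a) forces $\eps\circ e_1 = e_0\circ\delta-\Delta$ to have order $<p_0+p_1$, and because $e_1$ contributes order exactly $p_1$ (solvable for top derivative), the quotient $\eps$ must have order $<p_0$. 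Conversely any such low-order solution is a solution. The main obstacle is verifying the order bookkeeping in the last step—that the leading-order term of $e_1$ being invertible really does force the quotient order down to $<p_0$ rather than merely bounding it loosely—which again rests squarely on the solvability-for-highest-derivative hypothesis.
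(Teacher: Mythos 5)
Your proposal is correct and takes essentially the same route as the paper: all three parts rest on the same division-with-remainder observation (unique decomposition $f = g + \tilde f \circ e_1$ with $\mathrm{ord}\, g < p_1$, valid because $e_1$ is solvable for its top derivative), with (a) as uniqueness of the quotient, (b) as vanishing of the remainder giving a rational ODE system on the coefficients of $\delta$, and (c) via the substitution $(\delta,\eps)\mapsto(\delta-\alpha\circ e_1,\,\eps-e_0\circ\alpha)$ plus order bookkeeping. In fact your explicit form of that substitution in (c) is exactly right, whereas the paper's text abbreviates it slightly ($\tilde\eps=\eps-\eps_1$ in place of $\eps-e_0\circ\eps_1$).
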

\begin{proof}
We first make the standard observation is that, under our hypotheses on
$e_i$ ($i=0,1$), for any differential operator $f_i[u_i]$, we can find a
unique differential operators $g_i$ and $\tilde{f}_i$ such that $f_i =
g_i + \tilde{f}_i \circ e_i$, with $g_i$ of differential order $<p_i$.
This is easy to prove by noting that we cannot decrease the differential
order of $e_i$ by pre-composing it with a non-zero differential operator
and then recursively rewriting the highest order derivatives in $f_i$,
say $\frac{d^{p_i+q}}{dr^{p_i+q}} u_i$, as $-\frac{d^q}{dr^q}
\tilde{e}_i[u] + \frac{d^q}{dr^q} P_i e_i[u_i]$. Obviously, both $g_i$
and $\tilde{f}_i$ also have rational coefficients and, in fact, their
coefficients are linear rational differential operators applied to the
coefficients of $f_i$.

To prove part (a), note that the identity $e_0\circ \delta - \Delta = 0
+ \eps\circ e_1$, combined with our initial observation, implies that
$\eps$ is uniquely fixed once we know $\Delta$ and $\delta$.

To prove part (b), consider the decomposition $e_0\circ \delta - \Delta
= \tilde{\Delta} + \eps\circ e_1$, with $\tilde{\Delta}$ of differential
order $<p_1$, which by our initial observation always exists and is
unique. Thus, $\delta,\eps$ and $\Delta$ satisfy~\eqref{eq:offdiag} if
and only if the coefficients of $\tilde{\Delta}$ are all zero. But
construction, the coefficients of $\tilde{\Delta}$ are linear rational
differential operators acting on the coefficients of $\delta$ and
$\Delta$.

To prove part (c), we first apply our initial observation to get the
decomposition $\delta = \tilde{\delta} + \eps_1\circ e_1$, where the
differential order of $\tilde{\delta}$ is $<p_1$. Then $e_0 \circ
\tilde{\delta} = \Delta + \tilde{\eps}\circ e_1$, with $\tilde{\eps} =
\eps-\eps_1$. The differential orders of $e_0 \circ \tilde{\delta}$ and
$\Delta$ are both $<p_0+p_1$, hence by comparison we can conclude that
the differential order of $\tilde{\eps}$ is $<p_0$.
\end{proof}

\section{Systems of Regge-Wheeler equations}
\label{sec:rw}

Let
\begin{equation}
	f(r) = 1 - \frac{2M}{r}, \quad
	f'(r) = \frac{f_1(r)}{r}, \quad
	f_1(r) = \frac{2M}{r} .
\end{equation}
Define the \emph{(generalized) spin-$s$ Regge-Wheeler operator} with
\emph{mass parameter} $M$, \emph{angular momentum quantum number} $l$
and \emph{frequency} $\omega$ by
\begin{equation}
	\D_s \phi = \del_r f \del_r \phi
		- \frac{1}{r^2} [\B_l + (1-s^2)f_1] \phi + \frac{\omega^2}{f} \phi ,
\end{equation}
where $\B_l = l(l+1)$, with $l=0,1,2,\ldots$. We will assume that
$\omega \ne 0$ and that $s$ is a non-negative integer. Of course, for
any $s$, $\D_s$ has rational coefficients and satisfies the hypotheses
of Lemma~\ref{lem:pole-locations}.

Consider the upper triangular rational ODE system
\begin{equation} \label{eq:rw-sys}
	\begin{bmatrix}
		\D_{s_0} & \Delta \\
		0 & \D_{s_1}
	\end{bmatrix}
	\begin{bmatrix}
		u_0 \\ u_1
	\end{bmatrix}
	=
	\begin{bmatrix}
		0 \\ 0
	\end{bmatrix} ,
\end{equation}
where we suppose that $\Delta$ is of differential order at most $1$. As
discussed in Section~\ref{sec:offdiag}, this system is reducible to
diagonal by an equivalence (Section~\ref{sec:offdiag}) with rational
coefficients if and only if the following version of
Equation~\ref{eq:offdiag} is satisfied:
\begin{equation} \label{eq:rw-offdiag}
	\D_{s_0} \circ \delta = \Delta + \eps \circ \D_{s_1} .
\end{equation}
By Theorem~\ref{thm:reduce-order}, without loss of generality, we can
consider this problem restricted to the following class of
operators:
\begin{align}
\label{eq:Delta-param}
	\Delta &= \frac{1}{r^2} (\Delta_1 r \del_r + \Delta_0) , \\
\label{eq:delta-param}
	\delta &= \delta_1 r \del_r + \delta_0 , \\
\label{eq:eps-param}
	\eps &= \delta_1 r \del_r + [2\del_r(r\delta_1)-\frac{f_1}{f}\delta_1
		+ \delta_0] ,
\end{align}
where $\delta_i,\Delta_i$, for $i=0,1$, are all rational functions.
Plugging this parametrization into~\eqref{eq:rw-offdiag} and comparing
coefficients, we find the equivalent ODE system
\begin{multline} \label{eq:rw-decoupling}
	e \begin{bmatrix}
		\delta_0 \\ \delta_1
	\end{bmatrix}
	:=
	\begin{bmatrix}
		f & 0 \\
		0 & f
	\end{bmatrix}
	r^2 \del_r^2 \begin{bmatrix} \delta_0 \\ \delta_1 \end{bmatrix}
	+
	\begin{bmatrix}
		f_1 & -2\frac{\omega^2 r^2}{f} + 2[\B_l+f_1(1-s_1^2)] \\
		2 f & 2f-f_1
	\end{bmatrix}
	r \del_r \begin{bmatrix} \delta_0 \\ \delta_1 \end{bmatrix}
	\\
	+
	\begin{bmatrix}
		f_1 (s_0^2-s_1^2) & -2\frac{\omega^2 r^2(f-f_1)}{f^2} - \frac{f_1}{f} [\B_l+1-s_1^2] \\
		0 & f_1 (s_0^2-s_1^2 + \frac{1}{f})
	\end{bmatrix}
	\begin{bmatrix} \delta_0 \\ \delta_1 \end{bmatrix}
	=
	\begin{bmatrix} \Delta_0 \\ \Delta_1 \end{bmatrix}
\end{multline}
for $\delta_0,\delta_1$, with $\Delta_0,\Delta_1$ as inhomogeneous
sources.

Next, we will apply the analysis of Section~\ref{sec:rat-sols} to check
the conditions under which the system~\eqref{eq:rw-decoupling} has
rational solutions for $\delta_0,\delta_1$, with given
$\Delta_0,\Delta_1$.

It is easy to see that the only singular points of the
equation~\eqref{eq:rw-decoupling} are $r=0,2M,\oo$ (recall that $f(2M) =
0$). In this work, we will not consider $\Delta_0, \Delta_1$ with poles
at other values of $r$, which means that these points are the only
possible locations of the poles of $\delta_0, \delta_1$.


For each of the singular points, we have the following multipliers ($S$
and $T$), characteristic matrices ($E_n$) and characteristic exponents
($\sigma(e)$).
\begin{itemize}
\item $r=0$:
	$\check{\sigma}_0(e) = \{\pm s_0 \pm s_1\}$, with
	$\det E_n = (2M)^2 (n+s_0+s_1) (n+s_0-s_1) (n-s_0+s_1) (n-s_0-s_1)$
	and
	\begin{equation}
		S = \begin{bmatrix}
			1 & 0 \\
			0 & 1
		\end{bmatrix} , \quad
		T = \begin{bmatrix}
			r^{-1} & 0 \\
			0 & r^{-1}
		\end{bmatrix} , \quad
		E_n = -2M \begin{bmatrix}
			n^2 - 2n - s_0^2 + s_1^2 & -2n (1-s_1^2) \\
			2n & n^2 + 2n - s_0^2 + s_1^2
		\end{bmatrix} .
	\end{equation}
\item $r=2M$:
	$\check{\sigma}_{2M}(e) = \{ -1 \}$, with
	$\det E_n = (n+1)^2 [(n+1)^2 + 16M^2\omega^2]$
	and
	\begin{equation}
		S = \begin{bmatrix}
			\frac{(r-2M)}{2M} & 0 \\
			0 & \frac{(r-2M)^2}{4M^2}
		\end{bmatrix} , \quad
		T = \begin{bmatrix}
			1 & 0 \\
			0 & \frac{2M}{(r-2M)}
		\end{bmatrix} , \quad
		E_n = \begin{bmatrix}
			(n+1)^2 & -8M^2\omega^2 (n+1) \\
			2(n+1) & (n+1)^2
		\end{bmatrix} .
	\end{equation}
\item $r=\oo$:
	$\hat{\sigma}_\oo(e) = \{ -1, 0 \}$, with
	$\det E_n = 4\omega^2 n (n+1)$
	and
	\begin{equation}
		S = \begin{bmatrix}
			1 & 0 \\
			0 & 1
		\end{bmatrix} , \quad
		T = \begin{bmatrix}
			r^2 & 0 \\
			0 & 1
		\end{bmatrix} , \quad
		E_n = \begin{bmatrix}
			0 & -2\omega^2 (n+1) \\
			2n & n(n+1)
		\end{bmatrix} .
	\end{equation}
\end{itemize}

Suppose that $A_\rho = T^{-1}_\rho [\begin{smallmatrix} \Delta_0 \\
\Delta_1 \end{smallmatrix}]$ has leading order $\check{m} = \min_i
|\check{A}^i_0|$, and trailing order $\hat{m} = \max_i |\hat{A}^i_\oo|$.
At $r=2M$, we do not need the specific order, but just some lower bound
$m \le \min_i |\check{A}^i_{2M}|_{r=2M}$ for some integer $m\le -1$. We
choose a bound of this form because then the identity
\begin{equation}
	\min\{m\} \cup \check{\sigma}_{2M}(e)
	= \min \{m, -1\} = m
\end{equation}
determines that the Laurent series expansion of the solution $\delta =
[\begin{smallmatrix} \delta_0 \\ \delta_1 \end{smallmatrix}]$ must
belong to
\begin{equation}
	\delta
	\in \begin{bmatrix}
			\frac{(r-2M)}{2M} & 0 \\
			0 & \frac{(r-2M)^2}{4M^2}
		\end{bmatrix}
		(r-2M)^{\min\{m\} \cup \check{\sigma}_{2M}(e)} \C[[(r-2M)]]
	= \begin{bmatrix}
			1 & 0 \\
			0 & \frac{(r-2M)}{2M}
		\end{bmatrix}
		(r-2M)^{m+1} \C[[r]] .
\end{equation}
Since this is the only condition to be satisfied for poles other than at
$r=0,\oo$, without loss of generality, we can take $R=f^{m+1} =
(\frac{r-2M}{2M})^{m+1}$ as a convenient universal multiplier. So that,
according to Theorem~\ref{thm:univ-mult}, any rational solution
of~\eqref{eq:rw-decoupling} must satisfy
\begin{equation}
	\delta \in R \, \C[r,r^{-1}] = f^{m+1} \C[r,r^{-1}] .
\end{equation}
Finally, we can parametrize any such solution with the following bounded
order Laurent polynomial:
\begin{equation} \label{eq:bounded-rat-sol}
	\delta
	= f^{m+1} \sum_{n=\check{n}}^{\hat{n}} d_n r^n ,
	\quad \text{where} \quad
	\left\{
	\begin{aligned}
		\hat{n} &= \max \{\hat{m}\} \cup \sigma_\oo(e)
			= \max\{\hat{m},0\} , \\
		\check{n} &= \min \{\check{m}\} \cup \sigma_0(e)
			= \min\{\check{m},-s_0-s_1\} .
	\end{aligned}
	\right.
\end{equation}

\subsection{Examples} \label{sec:examples}

We finish with a few explicit examples. Sometimes, in specific examples,
a $\delta,\epsilon$ solution for a given $\Delta$ can be found by trial
and error. However, when unguided, such a process can be quite
laborious. And at the end, if no solution was found, one cannot
automatically conclude that a solution does not exist. Using the method
presented above, when the trial and error method becomes too laborious,
it can be automated using a computer algebra system. Moreover, our
method can also furnish a proof that in some situation no solution
exists.

Below, we find it helpful to use the notation $\check{O}(r^p)$ to denote
the leading order of a Laurent series at $r=0$ and $\hat{O}(r^q)$ to
denote the trailing order of a Laurent series at $r=\oo$.

\begin{enumerate}
\item \label{itm:f1-s01}
	The equation
	\begin{equation}
		\D_0 \circ \delta = \frac{f_1}{r^2} + \eps \circ \D_1 ,
	\end{equation}
	where
	\begin{equation}
		\begin{bmatrix}
			\Delta_0 \\ \Delta_1
		\end{bmatrix}
		= \begin{bmatrix}
			f_1 \\ 0
		\end{bmatrix}
		= \begin{bmatrix}
			\check{O}(r^{-1}) \\ 0
		\end{bmatrix}
		= \begin{bmatrix}
			\hat{O}(r^{-1}) \\ 0
		\end{bmatrix} ,
	\end{equation}
	gives rise to the normalized sources
	\begin{equation}
		|\check{A}_0|
		= \begin{bmatrix}
				0 \\ \oo
			\end{bmatrix} , \quad
		|\check{A}_{2M}|
		 = \begin{bmatrix}
				0 \\ \oo
			\end{bmatrix} , \quad
		|\hat{A}_\oo|
		= \begin{bmatrix}
				-3 \\ -\oo
			\end{bmatrix}
		.
	\end{equation}
	Since $A_{2M} = T_{2M}^{-1}\Delta$ has no pole at $r=2M$, we can
	choose the universal multiplier to be $R = 1$. $A_0$ and $A_\oo$ give
	the orders $\check{m} = 0$ and $\hat{m} = -3$ and hence the Laurent
	polynomial bounds $\check{n} = \min\{0,-0-1\} = -1$ and $\hat{n} =
	\max\{-3, 0\}$. From this information, we can conclude that there
	exists the \emph{unique solution}
	\begin{equation}
		\delta = -1 , \quad
		\eps = -1 .
	\end{equation}
\item \label{itm:f1-s00}
	The equation
	\begin{equation}
		\D_0 \circ \delta = \frac{f_1}{r^2} + \eps \circ \D_0
	\end{equation}
	where $\Delta_0$, $\Delta_1$ and the corresponding normalized sources
	are the same as in Example~\ref{itm:f1-s00}. Again, since $A_{2M} =
	T_{2M}^{-1}\Delta$ has no pole at $r=2M$, we can choose the universal
	multiplier to be $R = 1$. $A_0$ and $A_\oo$ give the orders $\check{m}
	= 0$ and $\hat{m} = -3$ and hence the Laurent polynomial bounds
	$\check{n} = \min\{0,-0-0\} = 0$ and $\hat{n} = \max\{-3,0\} = 0$.
	From this information, we can conclude that there exists \emph{no
	solution} for $\delta,\eps$.
\item \label{itm:vw-offdiag}
	The equation
	\begin{equation}
		\D_0 \circ \delta
		= -\frac{f_1}{r^2} \left(\B_l + \frac{f_1}{2}\right)
			+ \eps \circ \D_0
	\end{equation}
	where
	\begin{equation}
		\begin{bmatrix}
			\Delta_0 \\ \Delta_1
		\end{bmatrix}
		= \begin{bmatrix}
			-f_1 \left(\B_l + \frac{f_1}{2}\right) \\
			0
		\end{bmatrix}
		= \begin{bmatrix}
			\check{O}(r^{-2}) \\ 0
		\end{bmatrix}
		= \begin{bmatrix}
			\hat{O}(r^{-1}) \\ 0
		\end{bmatrix}
	\end{equation}
	gives rise to the normalized sources
	\begin{equation}
		|\check{A}_0|
		= \begin{bmatrix}
				-1 \\ \oo
			\end{bmatrix} , \quad
		|\check{A}_{2M}|
		 = \begin{bmatrix}
				0 \\ \oo
			\end{bmatrix} , \quad
		|\hat{A}_\oo|
		= \begin{bmatrix}
				-3 \\ -\oo
			\end{bmatrix}
		.
	\end{equation}
	Since $A_{2M} = T_{2M}^{-1}\Delta$ has no pole at $r=2M$, we can
	choose the universal multiplier to be $R = 1$. $A_0$ and $A_\oo$ give
	the orders $\check{m} = -1$ and $\hat{m} = -3$ and hence the Laurent
	polynomial bounds $\check{n} = \min\{-1,-0-0\} = -1$ and $\hat{n} =
	\max\{-3,0\} = 0$. From this information, we can conclude that there
	exists \emph{no solution} for $\delta,\eps$.
\item \label{itm:berndtson-s02}
	The equation
	\begin{equation}
		\D_0 \circ \delta = \Delta + \eps \circ \D_2 ,
	\end{equation}
	with
	\begin{equation}
	\begin{split}
		\Delta =&
			24 i f_1  r^2 \omega^3 - 4 i f (6 f f_1 +
			6 \B_l f_1 +\A_l)  r\omega \del_r\\
			&- 2 i \left( \A_l + 2 (\B_l-3) + (\A_l - \B_l) (1 + 2 \B_l) + 2 (\A_l + 6 \B_l) f - 9\frac{\A_l}{\B_l}  f^2 - 
			12 f^3 \right)  \omega \\
			&+\frac{ f_1 f\B_l  (- 4 f_1^2 + 8 f f_1 - 4 \B_l + 16 f\B_l +\A_l)}{ir\omega}\del_r + \frac{i f_1 \B_l (
			\A_l (\B_l - 7 f) + 12 f (1 - (2 + \B_l) f + f^2)) }{r^2 \omega}
	\end{split}
	\end{equation}
	\begin{equation}
		\begin{bmatrix}
			\Delta_0 \\ \Delta_1
		\end{bmatrix}
		= \begin{bmatrix}
			\check{O}(r^{-4}) \\
			\check{O}(r^{-4})
		\end{bmatrix}
		= \begin{bmatrix}
			\hat{O}(r^{3}) \\
			\hat{O}(r^{2})
		\end{bmatrix}
	\end{equation}
	gives rise to the normalized sources
	\begin{equation}
		|\check{A}_0| = \begin{bmatrix}
				-3 \\ -3
			\end{bmatrix} , \quad
		|\check{A}_{2M}| = \begin{bmatrix}
				0 \\ 1
			\end{bmatrix} , \quad
		|\hat{A}_\oo| = \begin{bmatrix}
				1 \\ 2
			\end{bmatrix} .
	\end{equation}
	Since $A_{2M} = T_{2M}^{-1}\Delta$ has no pole at $r=2M$, we can
	choose the universal multiplier to be $R = 1$. $A_0$ and $A_\oo$ give
	the orders $\check{m} = -3$ and $\hat{m} = 2$ and hence the Laurent
	polynomial bounds $\check{n} = \min\{-3,-0-2\} = -3$ and $\hat{n} =
	\max\{2,0\} = 2$. From this information, we can conclude that there
	exists a unique solution
	\begin{equation}
	\begin{split}
		\delta =&
			-6 i f_1 f r^3 \omega \del_r  - i (6 f f_1 + 12 \B_l f_1   		+\A_l)  r^2 \omega 
			+\frac{i f \B_l  (4\A_l - 4 - 2 f_1 + 24 f f_1 + 3 \B_l + f l(l-1)) r\del_r }{2 \omega}\\
			& + \frac{i   (\A_l^2 + 2 \A_l (9 + 5 \B_l) f - 6 \B_l f^2 (-8f_1 -6 + 3 \B_l)) }{4 \omega}
		,
	\end{split}
	\end{equation}
	with $\eps$ given by~\eqref{eq:eps-param}.
	The above result was obtained and checked with computer algebra.
\item \label{itm:berndtson-s12}
	The equation
	\begin{equation}
		\D_1 \circ \delta = \Delta + \eps \circ \D_2 ,
	\end{equation}
	with
	\begin{equation}
		\Delta =
			-24 i f_1 f  r\omega \del_r - 4 i \A_l  \omega
			+ \frac{6 f_1 f (3 f - 1) \B_l}{ir\omega}\del_r
			- \frac{-i f_1 \B_l  (18 f f_1  - 6 f\B_l  +\A_l)}{r^2 \omega}
	\end{equation}
	\begin{equation}
		\begin{bmatrix}
			\Delta_0 \\ \Delta_1
		\end{bmatrix}
		= \begin{bmatrix}
			\check{O}(r^{-3}) \\
			\check{O}(r^{-3})
		\end{bmatrix}
		= \begin{bmatrix}
			\hat{O}(r^{2}) \\
			\hat{O}(r^{1})
		\end{bmatrix}
	\end{equation}
	gives rise to the normalized sources
	\begin{equation}
		|\check{A}_0| = \begin{bmatrix}
				-2 \\ -2
			\end{bmatrix} , \quad
		|\check{A}_{2M}| = \begin{bmatrix}
				0 \\ 1
			\end{bmatrix} , \quad
		|\hat{A}_\oo| = \begin{bmatrix}
				0 \\ 1
			\end{bmatrix} .
	\end{equation}
	Since $A_{2M} = T_{2M}^{-1}\Delta$ has no pole at $r=2M$, we can
	choose the universal multiplier to be $R = 1$. $A_0$ and $A_\oo$ give
	the orders $\check{m} = -2$ and $\hat{m} = 1$ and hence the Laurent
	polynomial bounds $\check{n} = \min\{-2,-1-2\} = -3$ and $\hat{n} =
	\max\{1,0\} = 1$. From this information, we can conclude that there
	exists a unique solution
	\begin{equation}
		\delta =
			-12 i f_1 r^2 \omega + \frac{2 i
			f (\B_l - 2 f_1) (\B_l - 2 f + f_1)  r\del_r }{\omega} + \frac{i
			(\frac{\A_l^2}{\B_l}+ 6\frac{\A_l}{\B_l}(\B_l + 3) f -18 (\B_l-4)f^2 -36f)}{3 \omega}
		,
	\end{equation}
	with $\eps$ given by~\eqref{eq:eps-param}.
	The above result was obtained and checked with computer algebra.
\end{enumerate}

The solution from Example~\ref{itm:f1-s01} can be generalized to
\begin{equation}
	\D_{s_0} \frac{1}{(s_0^2-s_1^2)}
	= \frac{f_1}{r^2} + \frac{1}{(s_0^2-s_1^2)} \D_{s_1} ,
\end{equation}
which actually works for any complex values of $s_0,s_1$, except for
$s_0 = \pm s_1$. We obtained this parametric solution by trial and
error, while trying to understand and generalize some identities
from~\cite{berndtson}. However, simply having this formula does not tell
us whether it is the unique solution. Applying our systematic approach,
we can check uniqueness as we did in Example~\ref{itm:f1-s01} for
$s_0=0$ and $s_1=1$, but only for specific values of $s_0,s_1$ at a time.
The reason is that the lower bound $\check{n}$ on the Laurent polynomial
order of $\delta$ in~\eqref{eq:bounded-rat-sol} is influenced by $\min
\sigma_0(e) = -s_0-s_1$ (at least for non-negative integer values of the
$s_i$), which depend on the $s_i$.

Note also that the above formula is singular for $s_0=\pm s_1$ and no
longer tells us anything about the existence of solutions in those
cases. On the other hand, our systematic approach can check that
indeed no solution exists, again on a case by case basis, as we did
for $s_0=0$ in Example~\ref{itm:f1-s00}.

In Equation~(88) of~\cite{kh-vwtriang}, we had managed to reduce a
$3\times 3$ upper triangular system with Regge-Wheeler operators on the
diagonal and a single rational non-vanishing off-diagonal component.
Incidentally, the solution from Example~\ref{itm:f1-s01} was
instrumental in that simplification. The existence of a solution in
Example~\ref{itm:vw-offdiag} would mean that this system could be
further reduced to diagonal form by a rational transformation. This
question was left open in~\cite{kh-vwtriang}. However, the non-existence
of such solutions, as we have just confirmed in
Example~\ref{itm:f1-s01}, proves that no such further simplification is
possible.

In~\cite{kh-vwtriang}, we have arrived at an upper-triangular
Regge-Wheeler system as a partial decoupling of the radial mode
equations of the vector wave equation (which could be interpreted as the
\emph{harmonic} or \emph{Lorenz} gauge-fixed version of Maxwell's
equations) on the background of a Schwarzschild black hole. That work
was strongly inspired by~\cite{berndtson}, which achieved a similar
decoupling for the Lichnerowicz equation (which could be interpreted as
the \emph{harmonic} or \emph{de~Donder} gauge-fixed version of
linearized Einstein's equations) also on Schwarzschild. The methods and
results result achieved in~\cite{berndtson} are unfortunately somewhat
obscure and implicit. We aim to clarify those results using the
systematic methods that we have outlined in~\cite{kh-vwtriang} and in
the present work. For instance, the existence of the solutions from
Examples~\ref{itm:berndtson-s02} and~\ref{itm:berndtson-s12} is
equivalent to Equations~(3.49--51) from~\cite{berndtson}, which were
instrumental to their main decoupling results, but apparently obtained
by trail and error, without a clear guide to how they could be
reproduced independently. Fortunately, Examples~\ref{itm:berndtson-s02}
and~\ref{itm:berndtson-s12} show that our systematic approach can
rediscover these formulas in a straight forward way using computer
algebra.

\section{Discussion} \label{sec:discuss}

The main goal of this work was to conclusively decide when it is or is
not possible to reduce an upper triangular rational ODE system
like~\eqref{eq:rw-sys} to diagonal form by a transformation
like~\eqref{eq:triang-reduce} with rational coefficients, where on the
diagonal we have generalized Regge-Wheeler operators. This question was
left open in our previous work~\cite{kh-vwtriang}. In
Section~\ref{sec:offdiag}, we showed how to reduce this question to the
existence of a rational solution to an auxiliary rational ODE system. In
Section~\ref{sec:rat-sols} we showed that, under mild hypotheses, the
existence of a rational solution of a rational ODE system can be reduced
to a finite dimensional linear algebra problem. Hence, such a question
can always be conclusively decided, at least on a case by case basis. In
Section~\ref{sec:examples}, we gave several examples illustrating our
methods. These examples reproduce, in a systematic way, some identities
previously discovered by voluminous trial and error calculations
in~\cite{berndtson}.

These identities were used in~\cite{kh-vwtriang} to significantly
simplify, after a separation of variables, the coupled radial mode
equations of the vector wave equation on Schwarzschild spacetime. Our
Example~\ref{itm:vw-offdiag} shows that this simplification cannot be
further improved. The vector wave equation plays a role relative to the
Maxwell equation that is analogous to the Lichnerowicz equation relative
to the linearized Einstein equations. In a future work, we will further
build on the results of~\cite{berndtson} to apply to the Lichnerowicz
equation the same simplifications as were applied to the vector wave
equation in~\cite{kh-vwtriang}. The methods developed in this work, will
help decide how much these simplifications could be improved. Of course,
it will also be very interesting to see how much the simplifications
studied jointly in~\cite{kh-vwtriang} and the current work will
translate from the (non-rotating) Schwarzschild black hole to the
significantly more complicated case of the (rotating) Kerr black hole.

An interesting generalization of the question of the existence of
rational solutions to the rational ODE $e[u] = v$ is the
characterization of the image of $e$ when applied to arbitrary rational
arguments. An equivalent question is the characterization of the
rational cokernel of $e$. Then, even if no rational solution to $e[u] =
v$ exists, precisely identifying the equivalence class of $v$ in the
cokernel of $e$ might allow us to choose a representative from the
equivalence class of $v$ that is simplest, with respect to some
reasonable criteria. Such questions also have connections with the
theory of $\mathcal{D}$-modules with rational
coefficients~\cite[Ch.2]{vanderput-singer}, \cite[Sec.10.5]{seiler},
which is an algebraic formalism for studying linear differential
equations, especially those with polynomial or rational coefficients.
These topics may also be explored in future work.

\section*{Acknowledgments}
Research of the author was partially supported by the GA\v{C}R project
18-07776S and RVO: 67985840. The author also thanks Francesco Bussola
for help with converting Equations~(3.49--51) of~\cite{berndtson} into
the form given in Examples~\ref{itm:berndtson-s02}
and~\ref{itm:berndtson-s12}.

\bibliographystyle{utphys-alpha}
\bibliography{paper-grprop}

\end{document}